\documentclass[10pt,twocolumn]{article}

\usepackage{amsmath,amssymb}
\usepackage{amsthm}
\usepackage{enumitem}
\usepackage{geometry}
\usepackage{hyperref}
\usepackage{times}

\theoremstyle{plain}
\newtheorem{theorem}{Theorem}[section]
\newtheorem{lemma}[theorem]{Lemma}

\theoremstyle{definition}
\newtheorem{definition}[theorem]{Definition}

\theoremstyle{remark}

\title{\vspace{-1em}Do We Really Need to Read the Input?\\
An Optimality Proof for Stone Game III}
\author{Andrew Au\\[0.2em]
\small Independent Researcher\\[0.2em]
\small Corresponding author: \texttt{cshung@gmail.com}}
\date{\vspace{-1.5em}}

\begin{document}

\twocolumn[
\maketitle
\vspace{-1em}
]

\begin{abstract}
Stone Game III admits a standard backward dynamic program using \(O(n)\)
time and \(O(1)\) auxiliary space.  The upper bound is immediate, but its
optimality raises a deceptively simple question: must a correct algorithm
really inspect a linear number of input values?

For the original problem, an all-zero instance gives a short
indistinguishability proof that every position must be inspected.  This
argument appears to depend strongly on the possibility of a tie.  We show
that it does not.  Even under the promise that every input has a winner,
an adversary can force any deterministic algorithm to make
\(\Omega(n)\) inspections by combining modular move control with
indistinguishable input completions.  We also extend the argument to
positive but unbounded values, obtaining the same linear lower bound
without zeros or ties.  Together these results establish the asymptotic
optimality of the standard \(O(n)\)-time, \(O(1)\)-space solution in
several increasingly restrictive variants.
\end{abstract}

\noindent\textbf{Keywords:}
dynamic programming; lower bounds; adversary argument; input inspection;
game theory; Stone Game III.

\section{Introduction}

Stone Game III is a two-player perfect-information game introduced as
LeetCode Problem 1406 \cite{LeetCode1406}.  A row of \(n\) stones has integer
values
\[
v_0,v_1,\ldots,v_{n-1}.
\]
Alice and Bob alternate turns, with Alice moving first.  On each turn a
player removes the first one, two, or three remaining stones and adds
their values to her or his score.  Play ends when no stones remain.
Both players play optimally.  The required output is
\texttt{Alice}, \texttt{Bob}, or \texttt{Tie}, according to the final
scores.

The usual solution is a backward dynamic program, in the standard sense
of solving overlapping suffix subproblems and retaining their optimal
values \cite{CLRS}.  Only three states need to be retained, giving
\(O(n)\) time and \(O(1)\) auxiliary space.  This paper asks whether these
bounds are asymptotically optimal.

The space bound is already the smallest meaningful asymptotic bound in
the usual machine model.  The interesting issue is time.  It is tempting
to write, ``the algorithm must read the whole input,'' but that statement
is not self-evident.  Many problems can be solved without inspecting
every input position, and an adaptive algorithm may choose its probes
based on values seen earlier.

We make the input-access model explicit and prove a worst-case
\(\Omega(n)\) time lower bound.  We then remove ties from the problem and
prove the same asymptotic bound again.  The second proof explains why
linear time is inherent in the game rather than an artifact of the
\texttt{Tie} output.

As a further theoretical extension, we consider a modified domain in
which every stone value is positive.  To expose the game-theoretic
structure cleanly, this variant removes the original fixed upper bound
on stone values.  A dominant positive stone reveals the same
complement-to-four strategy and again gives an \(\Omega(n)\) lower bound.

\section{The Backward Dynamic Program}

Because the total value of all stones is fixed, maximizing one's own
score is equivalent to maximizing the difference between one's score and
the opponent's score.

\begin{definition}
Let \(D(i)\) be the maximum score difference that the player whose turn
begins at position \(i\) can force over the other player.
\end{definition}

If the current player takes \(k\) stones, their immediate gain is
\(\sum_{j=i}^{i+k-1}v_j\).  The opponent then becomes the current player
at position \(i+k\) and can force advantage \(D(i+k)\).  Therefore
\[
D(i)=
\max_{\substack{1\leq k\leq 3\\i+k\leq n}}
\left(
\sum_{j=i}^{i+k-1}v_j-D(i+k)
\right),
\qquad D(n)=0.
\]

The sign of \(D(0)\) determines the answer.  Since \(D(i)\) depends only
on \(D(i+1)\), \(D(i+2)\), and \(D(i+3)\), three variables suffice:

\begin{verbatim}
next1 = next2 = next3 = 0
for i from n - 1 down to 0:
    sum = value[i]
    best = sum - next1
    if i + 1 < n:
        sum += value[i + 1]
        best = max(best, sum - next2)
    if i + 2 < n:
        sum += value[i + 2]
        best = max(best, sum - next3)
    next1, next2, next3 = \
        best, next1, next2
\end{verbatim}

The three variables hold \(D(i+1)\), \(D(i+2)\), and \(D(i+3)\) at the
start of an iteration.  Each position performs at most three transitions.

\begin{theorem}
Stone Game III can be solved in \(O(n)\) time and \(O(1)\) auxiliary
space.
\end{theorem}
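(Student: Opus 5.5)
The plan is to prove correctness of the recurrence for \(D\) and then argue that the displayed loop evaluates it in the stated resources.

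\textbf{Correctness of the recurrence.} We proceed by backward induction on the suffix start \(i\), from \(i=n\) down to \(0\).

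The base case is \(D(n)=0\), since no stones remain and neither player gains anything further.

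For \(i<n\), the current player chooses some \(k\in\{1,2,3\}\) with \(i+k\le n\). The resulting game is the suffix starting at \(i+k\) with roles swapped. By the inductive hypothesis, the opponent can force a difference of exactly \(D(i+k)\) in that suffix, and cannot be held below it. So choice \(k\) yields exactly \(\sum_{j=i}^{i+k-1}v_j-D(i+k)\) for the mover. The best achievable difference is therefore the maximum over \(k\).

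Because the total score is fixed, maximizing the score difference is the same as maximizing one's own score. Hence optimal play in the original sense agrees with this definition. Alice wins, loses, or ties exactly according to whether \(D(0)>0\), \(D(0)<0\), or \(D(0)=0\).

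\textbf{Loop invariant.} We establish the following invariant: at the start of the iteration for index \(i\), the variables \texttt{next1}, \texttt{next2}, \texttt{next3} hold \(D(i+1)\), \(D(i+2)\), \(D(i+3)\). Here we interpret \(D(m)=0\) for \(m\ge n\).

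The terms with \(m>n\) are never actually used. The guards \(i+1<n\) and \(i+2<n\) exclude exactly the illegal moves \(k\) with \(i+k>n\).

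\begin{itemize}
\item Initially \(i=n-1\) and all three variables are \(0\), matching \(D(n)\) and the padded values, so the invariant holds.
\item Within an iteration, \texttt{sum} accumulates \(\sum_{j=i}^{i+k-1}v_j\) for \(k=1,2,3\) in turn. The variable \texttt{best} becomes the maximum of the legal candidates, which equals \(D(i)\) by the recurrence.
\item The final shift assigns \(D(i),D(i+1),D(i+2)\) to the three variables, which restores the invariant for \(i-1\).
\end{itemize}

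After the loop, \texttt{next1} equals \(D(0)\), and comparing it with zero gives the output.

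\textbf{Complexity.} Each iteration performs a constant number of array reads, additions, comparisons, and assignments. With \(n\) iterations the running time is \(O(n)\). Storage is a fixed number of scalar variables independent of \(n\), plus the read-only input, so auxiliary space is \(O(1)\).

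The one point needing care is the word-RAM assumption that intermediate values fit in \(O(1)\) words. Since \(|D(i)|\le\sum_j|v_j|\), these values need only \(O(\log n)\) extra bits beyond the input magnitudes. This is standard under the usual word-size assumption \(w=\Omega(\log n)\).

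The main obstacle is in the correctness step rather than the complexity. It is to justify that the value of the swapped-role suffix game is exactly \(D(i+k)\), and not merely a bound. The two inequalities come from strategy-stealing in each direction: the mover can guarantee the candidate value, and the opponent can hold the mover to it. The backward induction handles both once the inductive hypothesis is stated as ``\(D(i)\) is both guaranteed and enforceable.''
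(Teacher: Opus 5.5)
Your proposal is correct and follows the same route as the paper: the backward recurrence for \(D\), evaluated by the rolling three-variable loop with constant work per position. You simply make explicit the loop invariant, the guard conditions, and the two-sided backward induction that the paper states informally just before the theorem.

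One small terminological point: the two inequalities in your induction come from composing strategies (play the optimal \(k\), then follow the inductively given suffix strategy). That is ordinary backward induction, not ``strategy-stealing,'' which names a different argument altogether.
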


\begin{proof}
The recurrence is evaluated once at each of the \(n\) positions, with
constant work per position.  Only three dynamic-programming values, a
running sum, and a constant number of loop variables are retained.
\end{proof}

\section{Computational Model}

For a lower bound, we must specify what information an algorithm receives.
We use the random-access input model, a direct instance of deterministic
decision-tree or query complexity \cite{BuhrmanDeWolf,AroraBarak}.  The
input length \(n\) is known, and an algorithm may inspect any chosen
position \(v_i\) at unit cost.  Its choice of the next position may depend
on all values inspected so far.  Other unit-cost arithmetic and control
operations are allowed.

The lower bounds apply to deterministic algorithms that are correct on
every input in the stated domain.  An input inspection is sometimes
called a \emph{query} or \emph{probe}.  Since each inspection costs at
least one unit of time, a lower bound on inspections is also a time lower
bound.

The central tool is the standard adversary principle of
indistinguishability \cite{BuhrmanDeWolf,AroraBarak}.  If two inputs agree
at every position inspected during an execution, the algorithm has
exactly the same execution on both inputs.  It must therefore return the
same result on both.

\section{The Immediate Lower Bound}

Consider the all-zero input
\[
Z=(0,0,\ldots,0).
\]
Every move scores zero, so the correct answer is \texttt{Tie}.

\begin{theorem}
Every deterministic algorithm that correctly solves Stone Game III must,
in the worst case, inspect all \(n\) input positions.
\end{theorem}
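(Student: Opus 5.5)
We argue by indistinguishability on the all-zero input \(Z\), as the section heading suggests. Fix a deterministic algorithm \(A\) that is correct on every input, and run it on \(Z\). Because \(A\) is correct, it outputs \texttt{Tie}. Suppose, for contradiction, that it halts without inspecting some position \(p\). We will build an input \(Z'\) that agrees with \(Z\) at every inspected position but has a winner. By the indistinguishability principle of the previous section, \(A\) then runs identically on \(Z'\) and again outputs \texttt{Tie}, which is wrong on \(Z'\).

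The construction is \(Z'\), obtained from \(Z\) by setting \(v_p=1\) and leaving every other entry \(0\). It remains to check that \(Z'\) is not a tie, so we need \(D(0)\neq 0\) for \(Z'\). The total sum is \(1\), so the two final scores are \(a\) and \(1-a\), and their difference is \(2a-1\). This is odd, hence never zero. So the outcome is \texttt{Alice} or \texttt{Bob}, and we never need to determine which player actually gets the stone.

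The parity observation is the one step that needs care. A tempting alternative is to argue directly that the player who can reach position \(p\) takes it, but that requires a move-control analysis. The odd-sum trick avoids this entirely. If the original problem restricts values to a bounded range such as \([-1000,1000]\), the value \(1\) is still admissible, so no further adjustment is needed.

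Since the assumed skipped position \(p\) was arbitrary, the run on \(Z\) must inspect all \(n\) positions. The worst case therefore requires \(n\) inspections, which completes the plan.
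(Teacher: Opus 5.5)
Your proposal is correct and follows essentially the same route as the paper's proof. The paper also runs the algorithm on the all-zero input \(Z\) and changes an uninspected position to \(1\). It then uses the same parity argument: two integer scores summing to \(1\) cannot be equal, so the forced \texttt{Tie} output is wrong on \(Z'\).
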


\begin{proof}
Run the algorithm on \(Z\).  Suppose some position \(j\) is not inspected.
Construct \(Z'\) by changing only \(v_j\) from \(0\) to \(1\).

The algorithm sees the same value at every inspected position, so its
execution and output on \(Z'\) are identical to those on \(Z\).  It
therefore returns \texttt{Tie}.  However, the total value of \(Z'\) is
one.  The two final scores are integers whose sum is one, so they cannot
be equal.  Thus \(Z'\) cannot be a tie, a contradiction.

Consequently every position must be inspected on \(Z\), requiring at
least \(n\) inspections.
\end{proof}

This proves an exact \(n\)-probe lower bound, and hence an
\(\Omega(n)\) time bound.  Yet the proof uses a tie in two ways: the hard
input is a tie, and changing one value rules out a tie by parity.  It is
natural to ask whether linear time remains necessary if ties are
forbidden.

\section{The Winner-Only Variant}

Consider the promise version of Stone Game III in which the input is
guaranteed to have a winner.  The algorithm need only distinguish
\texttt{Alice} from \texttt{Bob}.  We develop a lower bound in this
stronger setting.

\subsection{Forcing a turn by grouping moves}

Suppose all stones in a prefix have value zero.  If Alice takes \(a\)
stones, where \(1\leq a\leq 3\), Bob can respond by taking \(4-a\)
stones, making the pair consume four.  Conversely, Alice may first take
one zero and then answer Bob's \(b\) with \(4-b\).  These familiar
subtraction-game responses \cite{WinningWays} control whose turn reaches
a later gadget without changing either score.  The DP calculation below
verifies the same modulo-four behavior.

\subsection{A single-stone gadget}

The gadget \(G=(0,0,0,1)\) suffices.  More strongly, only the
position containing \(1\) must remain unread; every other position may
already have been observed to be zero.

\begin{lemma}
For any position \(p\), set \(v_p=1\) and every other value to zero.
For every \(i\leq p\),
\[
D(i)=
\begin{cases}
-1,&p-i\equiv3\pmod4,\\
 1,&p-i\not\equiv3\pmod4.
\end{cases}
\]
\end{lemma}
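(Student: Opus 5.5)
The natural approach is backward induction on \(i\), from \(i=p\) down to \(0\), using the recurrence for \(D\) from the section on the backward dynamic program. Write \(d=p-i\). Since all values beyond \(p\) are zero, every move made from a position after \(p\) scores nothing, so \(D(j)=0\) for every \(j>p\).

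\textbf{Base cases.} For \(i\in\{p,p-1,p-2\}\), that is \(d\in\{0,1,2\}\), the current player can take all stones up to and including position \(p\) in one move. This move gains \(1\) and leaves a suffix with \(D=0\). Any other move scores \(0\) and hands the opponent a position from which, by the cases already computed (or trivially when \(d=0\)), the opponent's advantage is \(1\). Such a move is worth \(-1\). Hence \(D(i)=1\) for these three positions, which agrees with the claim since \(d\not\equiv 3\).

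\textbf{The case \(d=3\).} At \(i=p-3\), every legal move takes \(1\), \(2\), or \(3\) zeros and moves to \(d'\in\{2,1,0\}\). Each of these positions has \(D=1\), so every move is worth \(0-1=-1\) and \(D(i)=-1\).

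\textbf{Inductive step.} Take a general \(i\) with \(d\ge 4\), so the three successor positions \(i+1,i+2,i+3\) all satisfy \(i+k\le p\) and lie in range. Moreover \(d\ge 4\) means no move reaches position \(p\), so every immediate gain is \(0\) and
\[
D(i)=\max_{1\le k\le 3}\bigl(-D(i+k)\bigr).
\]
The successors have distances \(d-1,d-2,d-3\), which are three consecutive residues mod \(4\).
\begin{itemize}
\item If \(d\equiv 3\), these residues are \(2,1,0\). All three successors have value \(1\), giving \(D(i)=-1\).
\item Otherwise exactly one of \(d-1,d-2,d-3\) is \(\equiv 3\). That successor has value \(-1\), so choosing it gives \(D(i)=1\). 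No move can do better, since every value is \(\pm1\).
\end{itemize}
By the induction hypothesis the successors take the claimed values, which completes the induction.

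The only delicate point is the boundary. One must check that the move sets are truncated by \(i+k\le n\) only beyond \(p\), where values are zero and \(D=0\). One must also check that in the base cases the moves past \(p\) are accounted for correctly. Beyond that, this is a routine finite case check.
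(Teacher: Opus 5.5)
Your proof is correct and is essentially the paper's argument: it evaluates \(D(p)=D(p-1)=D(p-2)=1\) and \(D(p-3)=-1\) directly, then runs backward induction at the zero positions, using that \(d-1,d-2,d-3\) hit every residue modulo \(4\) except that of \(d\). One minor slip in your base cases: moves taking more than \(d+1\) stones also capture \(v_p\), so they score \(1\) (not \(0\)) and leave an all-zero suffix with \(D=0\), making them worth \(+1\) rather than \(-1\); this does not change the conclusion \(D(i)=1\).
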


\begin{proof}
When the indices exist, direct substitution gives
\(D(p)=D(p-1)=D(p-2)=1\) and \(D(p-3)=-1\); the cases \(p<3\) follow
from the first three equalities.  At every earlier zero,
\[
D(i)=\max\{-D(i+1),-D(i+2),-D(i+3)\}.
\]
The next three values are all \(1\) exactly when
\(p-i\equiv3\pmod4\), giving \(-1\); otherwise one is \(-1\), giving
\(1\).  Backward induction proves the formula.
\end{proof}

Thus a \(1\) at \(p\equiv3\pmod4\) is won by Bob, whereas a \(1\) at
\(p\equiv0\pmod4\) is won by Alice.  Every four consecutive positions
contain one index of each residue, so any unread block of four permits
both completions.

\section{Winner-Only Lower Bound}

We now combine the ingredients.  Since the all-zero array violates the
winner-only promise, it should not be treated as an input on which the
algorithm is required to behave correctly.  Instead, zeros are answers
given by an adversary while the algorithm probes positions.

\begin{lemma}
\label{lem:unread-block}
Any set of fewer than \(\lfloor n/4\rfloor\) inspected positions leaves
some four consecutive positions unread.
\end{lemma}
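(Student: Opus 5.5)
This is a pigeonhole argument over disjoint blocks. I would partition the first \(4\lfloor n/4\rfloor\) positions into the \(m=\lfloor n/4\rfloor\) disjoint blocks
\[
B_t=\{4t,\,4t+1,\,4t+2,\,4t+3\},\qquad 0\le t<m .
\]
These fit inside \(\{0,\ldots,n-1\}\) because \(4m\le n\). Each block consists of four consecutive positions, and the blocks are pairwise disjoint.

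Now let \(S\) be any set of inspected positions with \(|S|<m\). Each position of \(S\) lies in at most one block. Hence the number of blocks that meet \(S\) is at most \(|S|<m\). By pigeonhole, some block \(B_t\) contains no inspected position. That block is the required run of four consecutive unread positions, and this proves Lemma~\ref{lem:unread-block}.

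There is no real obstacle here. The only points to check are the boundary cases. When \(n<4\) we have \(m=0\), so the hypothesis \(|S|<0\) is vacuous. When \(n\) is not a multiple of 4, the leftover positions beyond \(4m\) are simply ignored. I would also note for later use that the unread block can be taken aligned, i.e.\ starting at a multiple of four. This is not needed for the statement, though, since the remark after the single-stone gadget lemma already says that any four consecutive positions contain one index of each residue modulo four.
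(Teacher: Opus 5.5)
Your proof is correct and follows the paper's argument: partition the first \(4\lfloor n/4\rfloor\) positions into \(\lfloor n/4\rfloor\) disjoint blocks of four, then apply pigeonhole. Your boundary remarks are accurate and make explicit what the paper leaves implicit: for \(n<4\) the hypothesis is vacuous, and leftover positions are simply ignored.
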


\begin{proof}
Partition the first \(4\lfloor n/4\rfloor\) positions into disjoint
blocks of four.  If fewer than \(\lfloor n/4\rfloor\) positions have
been inspected, at least one block contains no inspected position.
\end{proof}

\begin{lemma}[Adversary validity]
\label{lem:zero-adversary}
Suppose an adversary has answered every probe with zero and fewer than
\(q=\lfloor n/4\rfloor\) positions have been inspected.  The resulting
transcript is consistent with promised inputs won by Alice and by Bob.
\end{lemma}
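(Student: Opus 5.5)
The plan is to combine Lemma~\ref{lem:unread-block} with the single-stone gadget lemma. Since fewer than \(q=\lfloor n/4\rfloor\) positions have been inspected, Lemma~\ref{lem:unread-block} provides four consecutive indices \(s,s+1,s+2,s+3\) (with \(s+3\leq n-1\)), none of which has been inspected. These four indices cover all residues modulo \(4\). So we can pick \(p_A\) in the block with \(p_A\equiv 0\pmod 4\) and \(p_B\) in the block with \(p_B\equiv 3\pmod 4\).

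Next, define two complete inputs. Let \(X_A\) have \(v_{p_A}=1\) and all other values zero, and let \(X_B\) have \(v_{p_B}=1\) and all other values zero. Each input is zero at every inspected position, because \(p_A\) and \(p_B\) lie in the unread block. Hence each is consistent with the transcript in which the adversary answered zero to every probe.

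The outcomes follow from the single-stone gadget lemma with \(i=0\leq p\). For \(X_A\), \(p_A-0\equiv 0\not\equiv 3\pmod 4\), so \(D(0)=1>0\) and Alice wins. For \(X_B\), \(p_B\equiv 3\pmod 4\), so \(D(0)=-1<0\) and Bob wins. In particular, neither input is a tie, so both satisfy the winner-only promise. This gives promised inputs won by each player that are consistent with the transcript.

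The only step needing care is the index bookkeeping. We must check that the unread block lies entirely inside \(\{0,\dots,n-1\}\), which holds because the blocks in Lemma~\ref{lem:unread-block} partition the first \(4\lfloor n/4\rfloor\leq n\) positions. We must also check that the gadget lemma applies at \(i=0\), including when \(p<3\), and the lemma explicitly covers that case. I expect no real obstacle beyond stating clearly that consistency means agreement with the zero answers at all inspected positions.
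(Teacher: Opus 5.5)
Your proposal is correct and is essentially the paper's own proof: Lemma~\ref{lem:unread-block} supplies an unread block of four containing indices \(p_0\equiv0\pmod4\) and \(p_3\equiv3\pmod4\), and the single-stone lemma at \(i=0\) shows that a lone \(1\) at either index gives a zero-consistent completion won by Alice or by Bob, respectively. Your extra checks (that the block lies inside \(\{0,\dots,n-1\}\), and that the lemma covers \(p<3\)) make explicit what the paper leaves implicit.
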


\begin{proof}
The preceding lemma leaves four consecutive positions unread.  They
contain indices \(p_0\equiv0\pmod4\) and
\(p_3\equiv3\pmod4\).  Complete the input by placing a single \(1\) at
either \(p_0\) or \(p_3\) and zeros elsewhere.  Both completions agree
with every observed answer.  By the single-stone lemma, the first is won
by Alice and the second by Bob, so both satisfy the winner-only promise.
\end{proof}

\begin{theorem}
Every deterministic algorithm that is correct for all winner-only Stone
Game III inputs has worst-case time \(\Omega(n)\).
\end{theorem}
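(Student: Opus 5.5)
We argue by the zero-answering adversary of Lemma~\ref{lem:zero-adversary}. Fix a deterministic algorithm \(\mathcal{A}\) that is correct on every winner-only input, and let \(q=\lfloor n/4\rfloor\). We simulate \(\mathcal{A}\) and answer every probe it makes with the value \(0\). The claim to establish is that, along this simulated execution, \(\mathcal{A}\) must make at least \(q\) distinct inspections before it halts. Since each inspection costs at least one unit of time, this gives a worst-case time of at least \(q=\Omega(n)\) for \(n\ge 4\); smaller \(n\) are absorbed into the constant.

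Suppose, for contradiction, that \(\mathcal{A}\) halts after inspecting fewer than \(q\) distinct positions, all of which the adversary answered with zero. Lemma~\ref{lem:zero-adversary} then gives two completions of the transcript. In the first, \(A\), there is a single \(1\) at an unread index \(p_0\equiv0\pmod4\); in the second, \(B\), there is a single \(1\) at an unread index \(p_3\equiv3\pmod4\). Both satisfy the winner-only promise, and by the single-stone lemma the correct answer is \texttt{Alice} on \(A\) and \texttt{Bob} on \(B\).

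Next we apply indistinguishability. Run \(\mathcal{A}\) on the genuine input \(A\). By induction on the number of steps, its execution coincides with the simulated one. Each probe it issues is to a position it also probed in the simulation, and \(A\) has value \(0\) there, because \(p_0\) is unread. Hence \(\mathcal{A}\) halts with the same output on \(A\) as in the simulation, and the same argument applies to \(B\). A single output cannot be both \texttt{Alice} and \texttt{Bob}, so \(\mathcal{A}\) errs on one of two promised inputs, a contradiction.

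The one subtle point is that the simulated zero run is not itself a legal input, since the all-zero array violates the promise. So correctness of \(\mathcal{A}\) cannot be invoked on that run directly. The argument instead transfers the run to the real inputs \(A\) and \(B\) through the step-by-step induction. That induction must also count repeated probes of the same position only once, which is harmless because it only strengthens the time bound.
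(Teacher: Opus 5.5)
Your proposal uses the same ingredients as the paper: the zero-answering adversary, Lemma~\ref{lem:zero-adversary}, and a step-by-step indistinguishability argument that carries the simulated run over to two completions with opposite winners. The contradiction you derive is valid. The gap is in the inference in your first paragraph that a long simulated run ``gives a worst-case time of at least \(q\).'' As you point out yourself, the all-zero run is not a promised input. Worst-case time here is a maximum over promised inputs only, so a probe count on the all-zero run does not by itself bound anything. You carry the simulation's \emph{output} over to the legal inputs \(A\) and \(B\), but you never carry over its \emph{probe count}. Relatedly, your contradiction hypothesis assumes the simulation halts. The case where the algorithm runs forever on the zero transcript while reading fewer than \(q\) distinct positions is not covered.

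The repair is one more use of the same lemma and the same induction. Suppose the simulation ever issues a \(q\)-th distinct probe. Just before it, only \(q-1<\lfloor n/4\rfloor\) positions have been read, so Lemma~\ref{lem:zero-adversary} gives a promised completion \(C\) of that transcript. The execution on \(C\) coincides with the simulation up to that moment and therefore also issues the \(q\)-th probe, so \(C\) is a promised input costing at least \(q\) probes. If the simulation never reaches \(q\) distinct probes and never halts, your completion \(A\) makes the algorithm run forever on a promised input, contradicting correctness. If it halts, your existing argument applies.

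This transfer is exactly what the paper's sentence ``Fixing such a completion shows that the algorithm must terminate before \(q\) probes'' supplies. The paper argues in the contrapositive order: it assumes fewer than \(q\) probes on \emph{every promised} input, deduces that the adversary run halts early, and only then invokes the two opposite-winner completions.
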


\begin{proof}
Suppose for contradiction that a correct algorithm has worst-case
running time \(o(n)\) on promised inputs.  For all sufficiently large
\(n\), it then makes fewer than \(q=\lfloor n/4\rfloor\) probes on every
such input.

By Lemma~\ref{lem:zero-adversary}, before the \(q\)-th probe the current
transcript has a promised completion.  Fixing such a completion shows
that the algorithm must terminate before \(q\) probes, since its
execution is identical on that input up to that point.  At termination,
the same lemma supplies two promised completions that agree with every
observed answer but have opposite winners.  The algorithm returns the
same answer on both and is therefore wrong on one, contradicting
correctness.  Hence no correct \(o(n)\)-time algorithm exists, and the
worst-case time is \(\Omega(n)\).
\end{proof}

The constant \(1/4\) is unimportant for asymptotic optimality.  The point
is that avoiding every untouched block of constant size already requires
a linear number of inspections.

\section{Positive, Unbounded Stone Values}

The preceding winner-only construction uses zeros as a score-neutral
timing region.  We now require every input value to be at least one.
This section studies a theoretical variant allowing arbitrary positive
integers, with no fixed upper bound.  The fixed bound from the programming
problem is intentionally lifted so that turn control, rather than value
range, remains the central issue.

Start with a row of ones and replace one position by a value \(M\), where
\[
M>2n+3.
\]
The value \(M\) is a dominant prize: whoever takes it necessarily wins,
because all other stones together are worth only \(n-1\).  Intuitively,
both players should therefore continue to play the complement-to-four
strategy: the primary objective is to control whose turn reaches \(M\).
The DP below makes this intuition precise and reveals a secondary effect.

\begin{lemma}
Let every stone have value one except for a dominant prize at position
\(p\).  If \(p\equiv3\pmod4\), Bob can force himself to take the prize.
If \(p\equiv0\pmod4\), Alice can force herself to take it.
\end{lemma}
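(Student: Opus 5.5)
We prove the lemma by an explicit strategy-stealing-free \emph{pairing} argument on the prefix before the prize, exactly as in the subtraction-game discussion of the winner-only section. The stones after position \(p\) are irrelevant to this statement: we only claim that one designated player can guarantee that the prize is removed on her or his own turn. We do not yet claim anything about optimal play or the final score difference. Observe first that a player takes the prize at position \(p\) exactly when, at the start of her or his turn, the current position \(i\) satisfies \(p-2\leq i\leq p\) and the player chooses \(k\geq p-i+1\). A player whose turn begins at some \(i\) with \(p-i\in\{0,1,2\}\) can therefore always take the prize, since \(k=p-i+1\leq3\) stones are available (\(p<n\)).

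Case \(p\equiv3\pmod4\). Bob uses the complement-to-four response: whenever Alice takes \(a\) stones from a position \(i\) with \(i\equiv0\pmod4\) and \(i\leq p-3\), Bob takes \(4-a\) stones. We show by induction on rounds that at the start of each of Alice's turns before the prize is taken, the position \(i\) satisfies \(i\equiv0\pmod4\) and \(i\leq p-3\). The base case is \(i=0\), and indeed \(0\leq p-3\) because \(p\geq3\). For the step, note that \(p-i\equiv3\pmod4\) and \(p-i\geq3\). Alice's move of \(a\leq3\) stones ends at position \(i+a-1\leq i+2<p\), so she cannot take the prize. Bob then starts at \(i+a\), where \(p-(i+a)=p-i-a\). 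If \(p-i=3\), this value lies in \(\{0,1,2\}\), so Bob takes the prize by the observation above; this overrides the complement move in the final round. Otherwise \(p-i\geq7\), and Bob's \(4-a\) stones end at \(i+3<p\) and return the position to \(i+4\equiv0\pmod4\) with \(i+4\leq p-3\). The whole row has length \(n>p\), so every move Bob needs is available.

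Case \(p\equiv0\pmod4\). Alice first takes a single stone, handing Bob position \(1\) with \(p-1\equiv3\pmod4\). The situation is now the previous case with roles swapped and indices shifted by one. If \(p=0\), Alice simply takes the prize on her first move.

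The main subtlety is not the pairing itself but making precise that the final round works. When \(p-i=3\), the responder must deviate from the complement rule to grab the prize. We must also check that the available-stone constraint \(i+k\leq n\) is never violated, which holds because every move considered ends at or before \(p\leq n-1\).

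Strictly, the lemma only asserts that the prize \emph{can be forced}. Because \(M>2n+3\) exceeds any possible swing from the ones, taking the prize implies winning. So the lemma already gives opposite winners for the two residues, although the secondary effect on the exact value of \(D(0)\) is left to the subsequent DP computation.
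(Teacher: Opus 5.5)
Your proof is correct, but it takes a genuinely different route from the paper. The paper exhibits no strategy. Instead it evaluates the dynamic program exactly. It computes the period-six tail values $F(r)=0,1,2,3,2,1$ and a six-row phase table for $(D(p-3),\dots,D(p))$. It then observes that the four preceding states have the form $B(x)=(3-x,x+2,x+1,x)$, and that each prepended block of four ones sends $B(x)$ to $B(x-2)$. The residue of $p$ then fixes the sign of $D(0)$, and $M>2n+3$ is used to certify the maximizing branch in every transition.

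Your complement-to-four pairing proves the lemma exactly as worded and does not depend on the stone values. The tail after $p$, the mod-six case split, and the boundary cases $p<7$ and $r\le 2$ therefore disappear. The dominance step also needs only $M\ge n$, because forcing the prize guarantees a difference of at least $M-(n-1)>0$.

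The two routes use dominance in opposite directions. You derive the winner from forcing the prize. The paper derives the winner directly from the sign of $D(0)$ and leaves implicit the dominance step that turns this into the forcing statement.

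What the paper's computation buys is exact values at every position up to $p$. That gives the quantitative two-point loss per four-stone block, which it presents as the secondary effect. Your argument says nothing about the margin or about what optimal play looks like. That is harmless here, since the subsequent lower-bound theorem uses only the identity of the winner.
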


\begin{proof}
Let \(F(r)\) be the DP value of a row of \(r\) ones.  Its recurrence is
\[
F(r)=\max_{1\leq k\leq\min(3,r)}\{k-F(r-k)\},
\]
and direct induction gives the period-six sequence
\[
F(r)=0,1,2,3,2,1\qquad(r\bmod6=0,1,\ldots,5).
\]
Thus every possible tail after the prize is known.  The boundary tails
\(r=0,1,2\) must be evaluated using only the moves that remain legal;
direct substitution gives the first three corresponding rows below.
For \(r\geq3\), all three transitions are available, and the period-six
formula determines the prize block.  The boundary calculations agree
with the same phase table.

Let \(r\) be the number of trailing ones.  Substituting the corresponding
tail values through the block containing \(M\) gives:
\[
{\scriptsize\setlength{\arraycolsep}{2.5pt}
\begin{array}{c|c|c}
r\bmod6&(D(p-3),D(p-2),D(p-1),D(p))&x\\
\hline
0&(3-M,M+2,M+1,M)&M-2\\
1&(2-M,M+1,M+2,M+1)&M-1\\
2&(1-M,M,M+1,M+2)&M\\
3&(2-M,M-1,M,M+1)&M-1\\
4&(3-M,M,M-1,M)&M-2\\
5&(4-M,M+1,M,M-1)&M-3
\end{array}}
\]
For sufficiently large \(M\), every row has sign pattern
\(-,+,+,+\).  More importantly, the four DP states immediately preceding
this phase-dependent block always recover the common form
\[
B(x)=(3-x,\ x+2,\ x+1,\ x),
\]
where \(x\) is given in the last column.

For example, seven preceding ones followed by effective value \(x\)
give the exact table
\[
{\setlength{\arraycolsep}{2.5pt}
\begin{array}{c|rrrrrrrr}
v_i&1&1&1&1&1&1&1&x\\
\hline
D(i)&5-x&x&x-1&x-2&3-x&x+2&x+1&x.
\end{array}}
\]
Here \(x\) is chosen sufficiently large that every displayed
\(x-\ast\) is positive and every \(\ast-x\) is negative.  These signs
identify the maximizing branch of every DP transition, so the table is
not merely heuristic.

Prepending four ones transforms it into
\[
(5-x,\ x,\ x-1,\ x-2)=B(x-2).
\]
After \(t\) such iterations the parameter is \(x-2t\).  The condition
\(x>2t+3\) makes every \(x-\ast\) term positive and every
\(\ast-x\) term negative throughout, so the maximizing branches used
above remain valid.  Thus induction preserves the
three-positive, one-negative sign pattern, while each preceding
four-stone block replaces \(x\) by \(x-2\).  Since the recovered
\(B(x)\) block starts at \(p-7\), the negative state remains aligned
with \(p-3\pmod4\); the cases \(p<7\) follow directly from the
phase table.  Hence \(p\equiv3\pmod4\) gives \(D(0)<0\), while
\(p\equiv0\pmod4\) gives \(D(0)>0\), for every tail phase.

This drop also has a direct strategic explanation.  Both players know
that the player controlling the prize must answer a move of \(a\) stones
with \(4-a\) to preserve the alignment.  The other player therefore
takes three ones, forcing the controller to take one.  The player denied
the prize gains \(3-1=2\) points per four-stone block, exactly matching
the DP decrement.
\end{proof}

This lemma supplies opposite winners at two residue classes.  Any four
consecutive unread positions contain one index \(p_3\equiv3\pmod4\) and
one index \(p_0\equiv0\pmod4\).  Answer every probe with one.  Inside
such an unread run, create either completion
\[
\begin{aligned}
v_{p_3}&=M &&\text{and all other values are one},\\
v_{p_0}&=M &&\text{and all other values are one}.
\end{aligned}
\]
The first input is won by Bob and the second by Alice.  Both contain only
positive values, both have a definite winner, and both agree with every
answer returned to the algorithm.

\begin{theorem}
Stone Game III restricted to arbitrary positive integer values and
promised to have a winner still requires \(\Omega(n)\) worst-case time.
\end{theorem}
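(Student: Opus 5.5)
The plan is to repeat the adversary argument from the winner-only lower bound, replacing zeros with ones and the single \(1\) with a dominant prize \(M\). I would fix \(M=2n+4\), so that \(M>2n+3\) for the given \(n\). The adversary answers every probe with the value \(1\). As before, I set \(q=\lfloor n/4\rfloor\). Lemma~\ref{lem:unread-block} then guarantees that while fewer than \(q\) positions have been inspected, some four consecutive positions remain unread. Such a block contains an index \(p_3\equiv3\pmod4\) and an index \(p_0\equiv0\pmod4\).

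The first step is an analogue of Lemma~\ref{lem:zero-adversary}. Consider any transcript in which every answer is \(1\) and fewer than \(q\) probes have been made. It is consistent with two completions: the all-ones row with \(M\) placed at \(p_3\), and the all-ones row with \(M\) placed at \(p_0\). Both lie in the restricted domain, since all values are positive integers. By the preceding lemma on the dominant prize, the first completion has \(D(0)<0\) and the second has \(D(0)>0\). So the first is won by Bob and the second by Alice. In particular both have a definite winner, so both satisfy the promise. I would also spell out why "forcing the prize" yields a winner: the prize holder's score exceeds \(M>n-1\), which is at least the opponent's total. The lemma as proved already gives the sign of \(D(0)\) directly, though, so I would simply cite that conclusion.

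The second step is the contradiction argument, copied from the winner-only theorem. Suppose a correct algorithm makes fewer than \(q\) probes on every promised input, for all large \(n\). Run it against the all-ones adversary. Before the \(q\)-th probe, the transcript has a promised completion, for instance the \(p_0\) completion. The algorithm's execution on that input coincides with the adversary run, so the run must terminate before \(q\) probes. At termination, the two completions above agree with every observed answer but have opposite winners. The algorithm returns the same output on both and so errs on one of them. Hence some promised input requires at least \(\lfloor n/4\rfloor\) probes, and since each probe costs unit time, the worst-case time is \(\Omega(n)\).

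The main obstacle is not the adversary mechanics, which transfer verbatim, but checking that the dominant-prize lemma applies uniformly. It must hold for every position \(p\) and every tail phase \(r\bmod 6\) with the single fixed value \(M=2n+4\). Its inductive step needs \(x>2t+3\) for up to \(t\le n/4\) four-stone blocks, starting from an effective value \(x\ge M-3\). So I would verify that \(M-3-2t>3\) whenever \(t\le n/4\), which holds comfortably under \(M=2n+4\). If any slack were missing, I would simply enlarge \(M\), say to \(M=4n\). The theorem permits unbounded values, so any choice polynomial in \(n\) is allowed.
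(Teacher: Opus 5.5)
Your proposal is correct and follows the paper's proof essentially step for step: the adversary answers every probe with one, you take \(q=\lfloor n/4\rfloor\), apply the unread-block lemma, and place the prize at \(p_3\) or \(p_0\) to get two consistent promised completions with opposite winners. Your only addition is fixing \(M=2n+4\) explicitly and re-checking the slack \(x-2t>3\) in the prize lemma, which the paper leaves implicit in its hypothesis \(M>2n+3\).
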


\begin{proof}
Let \(q=\lfloor n/4\rfloor\), and suppose for contradiction that every
promised input requires fewer than \(q\) probes.  Answer every probe with
one.  Before the \(q\)-th probe, Lemma~\ref{lem:unread-block} leaves an
unread block of four, so the transcript has a promised dominant-prize
completion.  Fixing that completion shows that the algorithm must
terminate before \(q\) probes, since its execution is identical on that
input up to that point.

At termination, the unread block supplies the two dominant-prize
completions above.  They agree with every observed answer but have
opposite winners, so the algorithm is wrong on one of them.  Therefore
some promised input requires at least \(\lfloor n/4\rfloor\) probes,
which is \(\Omega(n)\).
\end{proof}

The two-point decrement explains why the construction lets \(M\) grow
with \(n\); a fixed finite value range is not claimed here.

\section{Discussion and Conclusion}

The first lower bound formalizes the common intuition that an algorithm
must read its input.  That intuition should be treated with caution.
Having \(n\) input values does not, by itself, imply an \(\Omega(n)\)
running time: an output may be insensitive to many positions, or a
promise may rule out troublesome completions.  ``The whole input must be
read'' is a claim requiring proof, not a lower-bound argument by itself.

What makes such a claim rigorous is indistinguishability.  After too few
probes, one must exhibit two valid inputs that agree at every observed
position but require different outputs.  The all-zero argument does this
with one unread position.  The winner-only arguments require more care:
they reserve enough unread space to construct completions with opposite
winners.

The winner-only proof also exposes the game's structure: complementary
moves control who receives a single nonzero stone, while an unread block
supplies either alignment.  The positive extension replaces that stone
with a dominant prize and reveals the two-point cost of preserving its
modulo-four alignment.  These arguments force only a constant fraction
of positions to be read, but that suffices to exclude \(o(n)\) time.

Combining the lower bound with the rolling dynamic program proves
\(\Theta(n)\) time and \(O(1)\) auxiliary space, even under the no-tie
promise; the time bound also survives for positive, unbounded values.
The broader lesson is methodological: input size merely suggests a lower
bound.  The proof must show that unread data admit valid completions
requiring different answers.

\noindent\textbf{Funding.}
This research received no specific grant from funding agencies in the
public, commercial, or not-for-profit sectors.


\begin{thebibliography}{10}
\small
\setlength{\itemsep}{0pt}
\setlength{\parskip}{0pt}

\bibitem{LeetCode1406}
LeetCode.
\newblock \emph{1406. Stone Game III}.
\newblock
\url{https://leetcode.com/problems/stone-game-iii/}.
\newblock Accessed August 4, 2026.

\bibitem{CLRS}
T.~H. Cormen, C.~E. Leiserson, R.~L. Rivest, and C.~Stein.
\newblock \emph{Introduction to Algorithms}.
\newblock 4th edition, MIT Press, 2022.

\bibitem{BuhrmanDeWolf}
H.~Buhrman and R.~de~Wolf.
\newblock Complexity measures and decision tree complexity: a survey.
\newblock \emph{Theoretical Computer Science}, 288(1):21--43, 2002.
\newblock
\url{https://doi.org/10.1016/S0304-3975(01)00144-X}.

\bibitem{AroraBarak}
S.~Arora and B.~Barak.
\newblock \emph{Computational Complexity: A Modern Approach}.
\newblock Cambridge University Press, 2009.

\bibitem{WinningWays}
E.~R. Berlekamp, J.~H. Conway, and R.~K. Guy.
\newblock \emph{Winning Ways for Your Mathematical Plays}, volume~1.
\newblock 2nd edition, A~K Peters, 2001.

\end{thebibliography}
\end{document}